\journal{Control Engineering Practice}
\begin{document}

\begin{frontmatter}
%\runtitle{Insert a suggested running title}  % Running title for regular
                                              % papers but only if the title
                                              % is over 5 words. Running title
                                              % is not shown in output.

\title{Reduced-order asymptotic observer of nonlinear friction for precise motion control}

\author{Michael Ruderman}

\ead{michael.ruderman@uia.no}

\address{University of Agder, 4879 Norway}

\begin{abstract}                          % Abstract of not more than 200 words.
Nonlinear friction has long been, and continues to be, one of the major challenges for precision motion control systems. A linear asymptotic observer of the motion state variables with nonlinear friction uses a dedicated state-space representation of the dynamic friction force (including pre-sliding) \cite{ruderman2023robust}, which is robust to the noisy input signals. The observer implements the reduced-order Luenberger observation law, while assuming the output displacement is the only available measurement. The latter is relevant for multiple motion control applications with use of encoder-type sensors. The uniform asymptotic stability and convergence analysis of the proposed observer are elaborated in the present work by using the Lyapunov function-based stability criterion by Ignatyev and imposing parametric constraints on the time-dependent eigenvalues of the system matrix to be always negative real. A design procedure for assigning a dominant, and thus slowest, real pole of the observer is proposed. Explanative numerical examples accompany the developed analysis. \textcolor[rgb]{0.00,0.00,0.00}{In addition, a thorough experimental evaluation is given for the proposed observer-based friction compensation which is performed for positioning and tracking tasks. The observer-based compensation, which can serve as a plug-in to a standard feedback controller, extends a PID feedback control that is optimally tuned for disturbance suppression. The experimental results are compared with and without the plugged-in observer-based compensator.}
\end{abstract}

\begin{keyword}
friction control \sep reduced-order observer \sep dynamic friction \sep pre-sliding regime \sep Lyapunov stability \sep time-variant system
\end{keyword}

\end{frontmatter}

\newtheorem{thm}{Theorem}
\newtheorem{lem}[thm]{Lemma}
\newtheorem{prop}[thm]{Proposition}
\newtheorem{clr}{Corollary}
\newdefinition{rmk}{Remark}
\newproof{pf}{Proof}

\section{Introduction}
\label{sec:1}

Observing the dynamic states \textcolor[rgb]{0.00,0.00,0.00}{(in contrast to stationary process variables)} in an actuated motion system can be essential for various reasons, especially when the relative output displacement is the only measured quantity and a high-precision motion \textcolor[rgb]{0.00,0.00,0.00}{(i.e., with a micro- and nano-meters range of accuracy)} is required. The classical Luenberger-type observers \cite{luenberger1971} require a linear state-space realization of the system dynamics, while a nonlinear observer design problem is often \textcolor[rgb]{0.00,0.00,0.00}{realized by solving a set of first-order PDEs} (partial differential equations) \cite{kazantzis1998} and requires an appropriate nonlinear coordinates transformation for the state variables and respective flows. 
Restrictive conditions on the flow maps and state transformations can prevent application of the (more or less) standard observer design methodologies 
%(see e.g. \cite{bernard2022observer} for survey) 
to such physical effects as a nonlinear kinetic friction. One of the main problems related to the observation of nonlinear friction states is that they cannot be properly considered as functions of the time argument and possess, moreover, a memory mechanism resulting in hysteresis. Such inherent problems with friction \textcolor[rgb]{0.00,0.00,0.00}{in the controlled motion systems, see e.g., \cite{ruderman2023book}, are often pronounced} less in the temporal than in the spatial domain, especially at the onset, stop, and reversals of a relative displacement. T\textcolor[rgb]{0.00,0.00,0.00}{his fact does not allow to effectively use also the so-called disturbance observers} (DOBs) (see \cite{ohishi1987dob} for an original work and \cite{shim2009} for analysis of its robust stability in the closed-loop), since the friction force dynamics \cite{AlBender2008} makes the related consideration in Laplace domain less applicable. Recall, that no Laplace transformation exists for general dynamic nonlinear operators with memory. \textcolor[rgb]{0.00,0.00,0.00}{Still, observation (correspondingly dynamic estimation) of the friction forces remains often essential for the control tasks. The friction state variables cannot be measured properly (see, e.g., \cite{harnoy2008}) and, at the same time, cannot be effectively mitigated by the standard feedback control techniques, also when including an integral action, see \cite{ruderman2025} for a detailed analysis. Also the Kalman filtering \cite{Kalman60}, understood to have a similar structure as the Luenberger observer while augmented by zero-mean stochastic process and measurement noise, see e.g. \cite[chapter~11.4]{balasubramanian1989}, \cite[chapter~9.3]{antsaklis2009}, cannot be effectively used for estimating the nonlinear friction. This is mainly due to a flawed assumption that stochastic process noise can capture the pre-sliding friction transitions.
}

Several works, to be mentioned here, were explicitly addressing observation of the friction force in the past. In \textcolor[rgb]{0.00,0.00,0.00}{an earlier work} \cite{olsson1998}, the authors summarized modeling and observer-based compensation of the dynamic friction, while using the co-called LuGre friction model \cite{DeWit1995} which has one internal nonlinear dynamic state (i.e., the one not measurable and without a direct physical interpretation). However, the proposed observer requires the relative velocity to be the measurable quantity \textcolor[rgb]{0.00,0.00,0.00}{which is fed back}. \textcolor[rgb]{0.00,0.00,0.00}{Moreover, the LuGre model turned out \textcolor[rgb]{0.00,0.00,0.00}{(over time of further studies and investigations)} to be sensitive to implementation, often causing instabilities depending on the numerical solver in use. This appeared crucial in case of the noisy signals (i.e., the used relative velocity and observation error) and restrictive as for the class of admissible input signals. In particular, it is not allowing for non-continuous (stepwise) transitions or 'too high' velocity rates (i.e., accelerations), see \cite{barahanov2000} for details.} Similar issues with the use of the LuGre friction model, when constructing a passivity-based friction observer for compensation in the control loop, were also addressed in \cite{freidovich2009lugre} along with the real-time implementation aspects and experiments. A friction compensation approach by using a reduced-order state observer was also proposed in \cite{Mallon2006}. The approach relies on the estimation of relative velocity by means of a reduced-order asymptotic observer, which is then used for the modeled set-valued dry friction. This scheme does not involve, however, any dynamic state of the pre-sliding friction and, this way, does not observe the state of a friction force but relies rather on its model-based computation. More recently, the work \cite{kim2019} proposed an model-free friction observer which, however, requires the measurement of the input torque (correspondingly force) and relies on the assumed torque balance and nominal system transfer characteristics in Laplace domain. \textcolor[rgb]{0.00,0.00,0.00}{Therefore, it has similar to the mentioned above issues as the DOB-based approaches in time or frequency domain.} An observer which incorporates the friction dynamics into the state equations and uses a Luenberger-type estimation scheme was proposed in \cite{ruderman2015}, while also assuming the measured relative velocity. Despite a not fully elaborated convergence analysis, the approach showed remarkable performance in compensating the nonlinear friction and brought the position control error to the level of \textcolor[rgb]{0.00,0.00,0.00}{the sensor (i.e., encoder) resolution in the experiments}, thus achieving zero steady-state error. An advanced version of this observation scheme, which uses only the measured output displacement and improves the robustness in both the pre-sliding and gross sliding regimes, was presented in \cite{ruderman2023robust}, \textcolor[rgb]{0.00,0.00,0.00}{but still without evaluating the compensation}. This latter study represents a starting point of the present work. \textcolor[rgb]{0.00,0.00,0.00}{The main motivation for the work is threefold. First, the proposed observer does not use any other measured system signals than the output displacement, which can also be highly noisy. Second, the used pre-sliding friction modeling is robust to the noisy input signals, that makes an observer operational also during relatively short transient phases of the motion reversals. Third, the developed stability analysis allows a robust observer parameterization over the whole operation range of relative velocity and, moreover, guarantees that no transient state oscillations occur during the observer convergence. Finally, the work provides a detailed comparative experimental evaluation of position control with and without plug-in of the proposed observer.}

The rest of the paper is organized as follows. Section \ref{sec:2} describes the class of the motion systems with kinetic friction under consideration, while providing the basic assumptions, dynamic states, and pre-sliding regime of the nonlinear friction. The main results are given in Section \ref{sec:3}, where the system dynamics with friction is first written in an observable state-space form and then transformed into the regular form. Based thereupon, a reduced-order Luenberger-type observer is designed and its convergence analysis is established for the time-varying observer system matrix. Finally, the parametric conditions for a monotonic and robust observer convergence are derived and the sensor noise propagation is also addressed in context of a practical implementation. A dedicated experimental case study is provided in Section \ref{sec:4}, where a laboratory linear displacement actuator is first described, followed by the design of a benchmarking PID feedback control. \textcolor[rgb]{0.00,0.00,0.00}{The latter is optimized for robust suppression of disturbances, which are assumed to occur due to nonlinear and uncertain friction}. Both, the positioning and tracking control experiments are reported for the optimal PID control and the same PID control augmented by the proposed asymptotic observer. Brief concluding remarks are given at the end of the paper in Section \ref{sec:5}.

\section{Motion system with nonlinear friction}
\label{sec:2}

Without loss of generality, consider the following class of the dynamic motion systems
\begin{equation}\label{eq:2:1}
m \ddot{x}(t) + f\bigl(\dot{x}(t)\bigl) = u(t) - g(t),
\end{equation}
where $m$ is the overall system inertia (i.e., lumped mass in case of a translational motion), $f(\cdot)$ is the weakly known nonlinear kinetic friction, and $u$ is a driving force available for the control. Note that the kinetic friction, as a function of relative velocity, is upper-bounded for a bounded argument $\dot{x}$ and moreover Lipschitz, i.e., $\bigl|\dot{f}\bigr| < \Phi$, where $\Phi > 0$, is a known positive constant. The main functional properties of $f\bigl(\dot{x}\bigr)$ are defined below in Section \ref{sec:2:sub:1}. It is worth noting that the output displacement $x$ and input force $u$ are provided in the generalized coordinates so that \eqref{eq:2:1} is appropriate for describing equally a translational as well as a rotational motion. Also we note that an external and, at the same time, matched load force $g(\cdot)$ is known and, thus, can be pre-compensated, correspondingly subtracted from $u$ in case of the observer design. Our goal is first to analyze the dynamic friction behavior in a way that allows to construct a robust asymptotic observer $\tilde{f}(t) \rightarrow f(t)$. The latter uses only the output measurement $x(t)$ and enables to compensate for the friction quantity $f(\cdot)$ when $x(t) \rightarrow x_{\textmd{ref}}(t)$ is under the feedback control.

\subsection{Basic assumptions}
\label{sec:2:sub:1}

The following basic assumptions about the nonlinear friction force are made, cf. \cite{ruderman2023book}.

\begin{enumerate}[(i)]

   \item The overall friction force at steady-state is a superposition
    of the Coulomb and viscous friction forces. This fundamental principle is given by
    \begin{equation}\label{eq:2:1:1}
    f(t) = F_c(t) + F_v(t)
    \end{equation}
    and is in line with several established approaches for modeling the kinetic friction
    with lumped parameters, cf. e.g., \cite{armstrong1994,AlBender2008}.
    Important to notice is that the superposition can be temporary
    lost during the fast dynamic transients, especially \textcolor[rgb]{0.00,0.00,0.00}{at higher system accelerations}.     
    
    \item The Coulomb friction force at steady-state 
    depends on the direction of the relative displacement only, i.e.,
    \begin{equation}\label{eq:2:1:2}
    F_c = C_f \, \mathrm{sign} ( \dot{x} ),
    \end{equation}
    and is parameterized by the Coulomb friction coefficient $C_f > 0$. The latter unites the dimensionless friction coefficient, often denoted by $\mu$, and the normal force $F_N$ acting on the frictional body \textcolor[rgb]{0.00,0.00,0.00}{and assumed here to be constant}, i.e., $C_f = \mu F_N$. 
    
    \item The viscous friction force depends on the relative velocity only and is given by \begin{equation}\label{eq:2:1:3} 
        F_v = \sigma \, \dot{x}.
        \end{equation}
        The linear friction law \eqref{eq:2:1:3} is well known also as a viscous damping, where $\sigma > 0$ is the viscous friction (or linear damping) coefficient.
    
    \item The Coulomb and viscous friction coefficients are often weakly known. Since the overall friction process can be time-varying, due to environmental conditions like the temperature, wear effects, changes in lubrication or dust, and moreover it can be position- an load-dependent, one can assume
        \begin{equation}\label{eq:2:1:4}
        C_f, \sigma \neq \mathrm{const}, \quad 0 <  C_f < C_f^{\max}, \quad 0 <  \sigma < \sigma^{\max},
        \end{equation}
        where $C_f^{\max}, \, \sigma^{\max} > 0$ are the known upper bounds.
    
    \item The transients of the friction force $\dot{f}(t)$ become essential (for modeling and observation) either during the beginning, stop, and reversals of the relative motion or
        \textcolor[rgb]{0.00,0.00,0.00}{at the high rates of acceleration amplitude} $|\ddot{x}|$ \textcolor[rgb]{0.00,0.00,0.00}{(i.e., jerks)}. Otherwise, they can be neglected at steady-state motion processes.
        
\end{enumerate}

\subsection{Dynamic friction states}
\label{sec:2:sub:2}

In order to augment the steady-state friction states in \eqref{eq:2:1:1} with an appropriate dynamic behavior, consider (separately) the transient equations of the Coulomb and viscous friction force components, i.e., $\dot{F}_c$ and $\dot{F}_v$ respectively.

The nonlinear Coulomb friction undergoes \emph{pre-sliding} transitions before saturating at $\pm C_f$. This leads to
\begin{equation}\label{eq:2:2:1}
\dot{F}_c = \left\{%
\begin{array}{ll}
    \dot{x} \cdot \partial F_c / \partial x \, , & \hbox{if } |F_c| < C_f,  \\[1mm]
    0, & \hbox{otherwise,} \\
\end{array}%
\right.
\end{equation}
cf. \cite{ruderman2016,ruderman2017,ruderman2017b}. Note that $\partial F_c / \partial x$ is at least once continuously differentiable for $t > t_r$ upon each motion reversal, i.e., when $\textrm{sign}\bigl( \dot{x}(t_r) \bigr) \neq \textrm{sign}\bigl( \dot{x}(t_r^+) \bigr)$ where $t_r$ is the time instant of a motion reversal and $t_r < t_r^+$ while $(t_r^+ - t_r)\rightarrow 0$. Since \eqref{eq:2:2:1} is generic for describing the pre-sliding friction curves in the coordinates of relative displacement and force, a particular form of the underlying modeling of the friction force $F_c(t)$ in pre-sliding becomes secondary. For more tribological details on the characteristic frictional behavior during the pre-sliding we refer to e.g., \cite{koizumi1984,AlBender2005}. At the same time, only those modeling approaches which have an unique feature $\partial F_c(t_r^+) / \partial x \rightarrow \infty$, cf. \cite{koizumi1984} can guarantee for the motion stop in \emph{finite time} when the system \eqref{eq:2:1} is autonomous, i.e. $g,u=0$ , see \cite{ruderman2017b} for details. 

For transient dynamics of the viscous friction term, one can assume the first-order linear behavior 
\begin{equation}\label{eq:2:2:2}
\dot{F}_v = \beta^{-1} (\sigma \dot{x} - F_v).
\end{equation}
The corresponding time constant is $\beta > 0$, while at steady-state \eqref{eq:2:2:2} reduces to \eqref{eq:2:1:3}. The dynamic behavior \eqref{eq:2:2:2} captures the so-called frictional lag, see e.g., \cite{armstrong1994,AlBender2008} for details. Note that the time constant $\beta$ is usually relatively low and can be uncertain depending \textcolor[rgb]{0.00,0.00,0.00}{on topology of contact surfaces} and lubrication conditions. If its nominal value is not directly identifiable, an arbitrary $0 < \beta \ll m\sigma^{-1}$ can be assumed to be sufficiently lower than the time constant of a corresponding \textcolor[rgb]{0.00,0.00,0.00}{nominal linear dynamic system. The latter would approximate} the friction term in \eqref{eq:2:1} by $f(\dot{x}) \approx \sigma \dot{x}$.

\subsection{Pre-sliding friction}
\label{sec:2:sub:3}

A principal shape of the dynamic Coulomb friction force during pre-sliding is illustrated in Figure  
\ref{fig:presliding} in the relative displacement coordinates. 
\begin{figure}[!h]
\centering
\includegraphics[width=0.7\columnwidth]{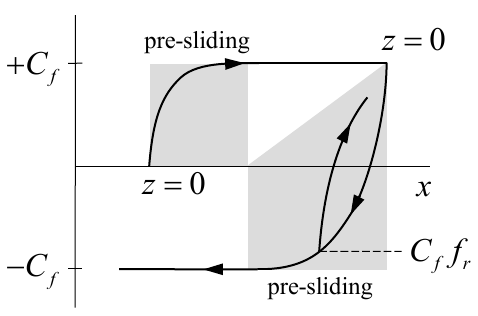}
\caption{Pre-sliding transitions of Coulomb friction force at motion beginning and reversals. Direction changes gives rise to hysteresis in friction.}
\label{fig:presliding}
\end{figure}
A pre-sliding region can be characterized after each motion reversal, or at the motion beginning, until the Coulomb friction force becomes saturated at $\pm C_f$. A pre-sliding displacement $z$ (also the so-called \emph{pre-sliding distance}) is understood as an internal state of the dynamic friction, cf. \cite{ruderman2023book}, and is captured by
\begin{equation}\label{eq:2:3:1}
z = s \int \limits^{t}_{t_{r}} \dot{x} \, dt,
\end{equation}
where $s > 0$ is a linear scaling factor used when modeling the $F_c(t)$ transition curves.    

We assume the following logarithmic form of the force transitions, which results in an increase in the corresponding hysteresis loop area proportional to the 2nd power of the pre-sliding distance
\begin{equation}\label{eq:2:3:2}
f_0(z)= z \bigl(1-\ln|z|\bigr),
\end{equation}
as suggested and justified in \cite{koizumi1984} for branching of the hysteresis friction force. Note that for \eqref{eq:2:3:2}, the pre-sliding distance state is well defined and monotonic only on the interval $z \in [-1, \, 0) \cup (0, \, 1]$, that requires an introduction of the scaling factor $s$. \textcolor[rgb]{0.00,0.00,0.00}{We use the thereupon-based modeling approach \cite{ruderman2017} which results in the  (normalized) pre-sliding friction map}
\begin{equation}\label{eq:2:3:3}
f_p(t)= \bigl| \mathrm{sign}(\dot{x}) - f_r \bigr| \, z
\bigl(1-\ln|z|\bigr) + f_r.
\end{equation}
This implies a pre-sliding friction memory $f_r := f_p(t_r)$ for each motion reversal at time $t_r$, cf. Figure \ref{fig:presliding}. 

Since the transient mapping \eqref{eq:2:3:3} is defined for the pre-sliding region only, i.e., for $|z| \leq 1$, the overall Coulomb friction force with dynamic transitions results in
\begin{equation}\label{eq:2:3:4}
F_c= \left\{%
\begin{array}{ll}
    C_f \, f_p(t) , & \hbox{if } |z|
\leq 1,  \\[1mm]
    C_f \, \mathrm{sign}\bigl(\dot{x}(t)\bigr), & \hbox{otherwise,} \\
\end{array}%
\right.
\end{equation}
cf. with \eqref{eq:2:1:2}. For the pre-sliding displacement state $z$, which is linked to the relative output displacement $x$ by the scaling factor $s$, \textcolor[rgb]{0.00,0.00,0.00}{one obtains}
\begin{equation}\label{eq:2:3:5}
\frac{\partial F_c}{\partial z} = - C_f
\bigl|\mathrm{sign}(\dot{x})-f_r \bigr| \, \ln|z|,
\end{equation}
\textcolor[rgb]{0.00,0.00,0.00}{and, this way, also $\partial F_c / \partial x$. Here recall that $\partial F_c / \partial x = s^{-1} \partial F_c / \partial z$, cf. with \eqref{eq:2:3:1}.} Since $z(t_r) = 0$, it can be seen that the instantaneous \emph{pre-sliding stiffness} $\partial F_c(t_r^+) / \partial x = \partial F_c(t_r^+) / \partial z \rightarrow \infty$, that is a crucial modeling feature for motion stops in a finite time, cf. \cite{ruderman2017b}.

It is worth noting that while other fairly suitable modeling approaches, like Dahl \cite{dahl1976}, generalized Maxwell-slip (GMS) \cite{AlBender2005b}, or modified Maxwell-slip \cite{ruderman2013}, can also be used for describing the pre-sliding hysteresis curves in $F_c(t)$, they fail to provide an infinite instantaneous stiffness, \textcolor[rgb]{0.00,0.00,0.00}{that is addressed above and required for the (physical) free  motion will stop in a finite-time.}

\section{Observation of motion states with friction}
\label{sec:3}

In order to design an asymptotic state observer, see \cite{luenberger1971} for basics, the $g$-disturbance compensated system \eqref{eq:2:1} with \eqref{eq:2:2:1} and \eqref{eq:2:2:2} can be transformed into the state-space form with the system matrix $A$ and the input and output coupling vectors $B$ and $C$, respectively. Introducing the state vector $w \equiv (w_1, w_2, w_3)^T : = (x, \dot{x}, f)^T$, the corresponding state-space model
\begin{equation}\label{eq:3:1}
\dot{w} = \underset{A} {\underbrace{
\left(%
\begin{array}{ccc}
0     & 1     & 0 \\
0     & 0     & -1/m \\
0     & \bigl(\partial F_c / \partial x +\sigma/\beta \bigr)  &  0
\end{array}%
\right) } } w + \underset{B} {\underbrace{
\left(%
\begin{array}{c}
  0 \\
  1/m \\
  0
\end{array}%
\right)
 } }
u
\end{equation}
with $C = (1,0,0)$ is obtained. Note that the system matrix has one \textcolor[rgb]{0.00,0.00,0.00}{state-dependent} term $\partial F_c / \partial x$ so that the system is time-varying, i.e., $A(t)$, correspondingly state-varying, during the pre-sliding. While $\partial F_c / \partial x \rightarrow \infty$ at $t_r$ due to $z=0$, for practical handling of the system matrices we assume  $\partial F_c / \partial x \rightarrow \kappa$ with $0 \ll \kappa < \infty$. Then, one can show that in both boundary cases $\partial F_c / \partial x = 0$ (i.e., motion is in gross sliding) and $\partial F_c / \partial x = \kappa$ (i.e., motion is in pre-sliding) the $(A,C)$-pair is observable in the Kalman sense for any positive $\kappa < \infty$. Since $\partial F_c / \partial x$ decreases monotonically towards zero \textcolor[rgb]{0.00,0.00,0.00}{with the progressing time} $t > t_r$, the same observability is valid on the whole interval $\partial F_c / \partial x \in [0, \, \kappa]$. This allows designing an asymptotic state observer in Luenberger sense \cite{luenberger1971} and, thus, obtain a states' estimate $\tilde{w}(t) \rightarrow w(t)$ for $t \rightarrow \infty$. An explicit analysis of the observer convergence with the time-varying system matrix is in Section \ref{sec:3:sub:2}.

\subsection{Reduced-order Luenbeger observer}
\label{sec:3:sub:1}

Since the measured output state $x$ does not need to be explicitly observed, a reduced-order observer \cite{luenberger1971} can be applied. This is improving the convergence properties of $\tilde{w}(t)$ and, most importantly, simplifying the design and analysis of the parametric conditions for tuning. Recall that for a reduced-order Luenbeger observer, the state-space model \eqref{eq:3:1} has to be transformed into the regular form
\begin{equation}
\left(%
\begin{array}{c}
  \dot{\bar{w}} \\
  \dot{y} \\
\end{array}%
\right) =
\left(%
\begin{array}{cc}
A_{11}     & A_{12}  \\
A_{21}     & A_{22}
\end{array}%
\right) \left(%
\begin{array}{c}
  \bar{w} \\
  y \\
\end{array}%
\right) + \left(%
\begin{array}{c}
  B_{\bar{w}} \\
  B_y \\
\end{array}%
\right) \, u,
 \label{eq:3:1:1}
\end{equation}
where $\bar{w} = w_1$ and $y=(w_2,w_3)^T$. Also recall that \eqref{eq:3:1:1} transfers an initial state-space representation into the regular form by providing a separation into the
measurable and unmeasurable states $\bar{w}$ and $y$, respectively. The resulting exclusion of the estimate $\bar{w}$ reduces the order of observer dynamics and, thus, complexity of the associated poles assignment. The reduced-order Luenberger observer is given by, cf. \cite{luenberger1971},
\begin{eqnarray}
\label{eq:3:1:2}
\dot{\tilde{y}} &=& (A_{22}-LA_{12}) \tilde{y} + (B_y -L B_{\bar{w}}) u + \\
\nonumber   & & (A_{21} - L A_{11} + A_{22}L -LA_{12}L) \bar{w},
\end{eqnarray}
where $L \equiv (L_1, L_2)^T$ is the vector of observer gains to be assigned. The latter is understood as an observer design and discussed in detail below in Sections \ref{sec:3:sub:2} and \ref{sec:3:sub:3}. The dynamic variable $\tilde{y}$ is the estimate vector of the unmeasurable system states, and since $\tilde{y}(t)$ is \textcolor[rgb]{0.00,0.00,0.00}{transformed} according to the right-hand-side of \eqref{eq:3:1:2}, a back transformation
\begin{equation}\label{eq:3:1:3}
(\tilde{w}_2, \tilde{w}_3)^T(t) = \tilde{y}(t) + L \bar{w}(t),
\end{equation}
is also required for obtaining the observed system states of interest, see \cite{luenberger1971} for details.

\subsection{Convergence analysis}
\label{sec:3:sub:2}

Since both $u$ and $\bar{w}$ are the bounded dynamic exogenous quantities entering the observer state equation \eqref{eq:3:1:2}, the only condition for the asymptotic stability of observer and thus for $\tilde{y}(t) \rightarrow y(t)$, is that the observer system matrix $\tilde{A} \equiv (A_{22}-LA_{12})$ is Hurwitz. We recall however that
\begin{equation}\label{eq:3:2:1}
\tilde{A} = \left(
                    \begin{array}{cc}
                    -L_1 &  - 1/m \\
                    \partial F_c / \partial x + \sigma / \beta - L_2 & 0 \\
                    \end{array}
                  \right)
\end{equation}
is time-dependent, and that in general its eigenvalues do not provide sufficient or necessary information about the stability guarantees. Only if the eigenvalues of $\tilde{A}(t)$ depend "slowly" on time, then an asymptotic stability is related to the time-dependent eigenvalues, cf. \cite{Ilchmann2006}.

For proving \textcolor[rgb]{0.00,0.00,0.00}{rigorously} the asymptotic stability of $\tilde{A}(t)$, we use the stability theorem by Ignatyev \cite{Ignatyev1997} for linear time-varying second-order differential equations of the form
\begin{equation}\label{eq:3:2:2}
\ddot{\psi} +  a_1(t) \dot{\psi} +  a_0(t) \psi = 0, 
\end{equation}
where $a_0(t)$, $a_1(t)$ are non-negative, continuous and bounded functions of the time.
\begin{thm}[cf. Ignatyev\cite{Ignatyev1997}]
\label{theorem:1} 
If the coefficients of \eqref{eq:3:2:2} satisfy the following conditions 
\begin{enumerate}[(i)]
  \item $\; \exists \; \alpha_1 > 0 \; \forall \; t \geq 0 \; : \quad \bigl|\dot{a}_0(t)\bigr|+ \bigl|a_1(t)\bigr| \leq \alpha_1$, 
  \vspace{0.5mm} 
  \item $\; \exists \; \alpha_2 > 0 \; \forall \; t \geq 0 \; : \quad 0 < \alpha_2 \leq \dot{a}_0(t) + 2 a_0(t) a_1(t)$, 
\end{enumerate}
then the differential equation \eqref{eq:3:2:2} is uniformly asymptotically stable. 
\end{thm}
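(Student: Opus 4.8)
The plan is to apply Lyapunov's direct method for nonautonomous systems to the first-order realization of \eqref{eq:3:2:2}. First I would set $x_1 = \psi$, $x_2 = \dot{\psi}$, giving $\dot{x}_1 = x_2$ and $\dot{x}_2 = -a_0(t) x_1 - a_1(t) x_2$, so that uniform asymptotic stability of \eqref{eq:3:2:2} is equivalent to that of the origin of this linear time-varying system. The candidate I would start from is the instantaneous energy $V(t,x) = a_0(t)\, x_1^2 + x_2^2$, whose derivative along trajectories collapses, after substituting the equation, to the compact expression $\dot{V} = \dot{a}_0\, x_1^2 - 2 a_1\, x_2^2$. Boundedness from condition (i), together with continuity of $a_0, a_1$, would immediately supply the upper sandwich bound $V \le c_2 (x_1^2 + x_2^2)$ that is needed for decrescence.

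The difficulty, and the reason conditions (i)--(ii) are phrased as they are, is that $a_0$ and $a_1$ are only assumed non-negative and may touch zero. Hence $V$ is merely positive semidefinite (there is no uniform lower bound $c_1 \|x\|^2 \le V$ when $a_0 \to 0$), and $\dot{V}$ is sign-indefinite because $\dot{a}_0\, x_1^2$ can be positive; in fact condition (ii) forces $\dot{a}_0 \ge \alpha_2 > 0$ precisely when $a_0$ vanishes. This is exactly why the frozen-coefficient eigenvalue reasoning is inadmissible here, as already noted before the theorem. The combination $\dot{a}_0 + 2 a_0 a_1$ appearing in (ii) is the invariant that measures the persistent stiffness-times-damping of the system in an averaged sense, and the whole argument hinges on converting it into a decay estimate.

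To exploit (ii) I would augment the analysis with the auxiliary (Matrosov-type) function $W(t,x) = x_1 x_2$, whose derivative is $\dot{W} = x_2^2 - a_0 x_1^2 - a_1 x_1 x_2$. The strategy is then: (a) use $\dot{V} = \dot{a}_0 x_1^2 - 2 a_1 x_2^2$ with condition (i) to control the growth of $V$ along trajectories and to extract square-integrability of $x_2$ wherever $a_1$ is active; (b) on the set where the dissipation $a_1 x_2^2$ momentarily vanishes, combine $\dot{W}$ with the lower bound $\dot{a}_0 + 2 a_0 a_1 \ge \alpha_2$ to form a single composite functional $U = V + \eta W$, with $\eta > 0$ small and \emph{constant} (so that no uncontrolled $\dot{a}_1$ is ever produced), whose total derivative is bounded above by $-\alpha_2\, x_1^2 - c\, x_2^2$; and (c) conclude uniform asymptotic stability from the standard nonautonomous Lyapunov theorem, the uniformity being inherited from the uniform constants $\alpha_1, \alpha_2$ rather than from pointwise-in-time estimates.

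I expect the main obstacle to be step (b): reconciling the fact that condition (ii) makes $\dot{a}_0$ large and positive exactly where $a_0$ is small, so the naive energy derivative is worst-behaved precisely where decay is required, with the need for a uniformly negative-definite bound. Concretely, the composite functional $U = V + \eta x_1 x_2$ is positive definite only where $a_0$ is bounded below by a multiple of $\eta^2$, so the argument cannot be purely algebraic; it must trade the transient growth of $V$ permitted by $\dot{a}_0 x_1^2$ against the guaranteed integral gain $\int_0^t (\dot{a}_0 + 2 a_0 a_1)\, ds \ge \alpha_2 t \to \infty$ supplied by (ii). Handling this degeneracy of $a_0(t)$ so as to yield \emph{uniform}, and not merely asymptotic, convergence is where the sharpness of Ignatyev's criterion is really needed, and is the step on which I would spend the most effort.
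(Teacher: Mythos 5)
The paper itself offers no proof of this theorem --- its ``proof'' is a pointer to Ignatyev's original article --- so your attempt must stand on its own, and it has a genuine gap at its central step (b). With $V = a_0 x_1^2 + x_2^2$ and $U = V + \eta x_1 x_2$ ($\eta>0$ constant), the derivative is $\dot U = (\dot a_0 - \eta a_0)x_1^2 + (\eta - 2a_1)x_2^2 - \eta a_1 x_1 x_2$. The bound $\dot U \le -\alpha_2 x_1^2 - c\,x_2^2$ you require in (b) is not merely hard to establish; it is false at exactly the times you flag as problematic: if $a_0(t)=0$, condition (ii) forces $\dot a_0(t)\ge\alpha_2$, so on the $x_1$-axis $\dot U = \dot a_0\, x_1^2 \ge \alpha_2 x_1^2 > 0$, and no choice of the constant $\eta$ can repair a derivative that is strictly positive there. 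Your fallback --- trading transient growth against the ``integral gain'' $\int_0^t(\dot a_0 + 2a_0a_1)\,ds \ge \alpha_2 t$ --- is not an argument: $\int_0^t \dot a_0\,ds = a_0(t)-a_0(0)$ is bounded, so (ii) only says $\int a_0 a_1\,ds \to\infty$, a statement about the coefficients that you never couple to the trajectory.

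The missing idea is to weight the kinetic term by $1/a_0$ rather than the potential term by $a_0$. First, (i) and (ii) together imply that $a_0$ cannot linger near zero: whenever $a_0 \le \alpha_2/(4\alpha_1)$ one has $\dot a_0 \ge \alpha_2 - 2a_0 a_1 \ge \alpha_2/2 > 0$, so after a finite time (at most $1/(2\alpha_1)$, which is harmless for a linear system with bounded coefficients) $a_0(t)$ is trapped in $[\underline a, \bar a]$ with $\underline a := \alpha_2/(4\alpha_1) > 0$ and $\bar a := \sup a_0 < \infty$. Now take $V = x_1^2 + x_2^2/a_0(t)$, which is positive definite and decrescent thanks to this two-sided bound; along trajectories $\dot V = -\bigl(x_2^2/a_0^2\bigr)\bigl(\dot a_0 + 2 a_0 a_1\bigr) \le -(\alpha_2/\bar a^2)\,x_2^2$. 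In other words, the combination $\dot a_0 + 2a_0a_1$ in (ii) is precisely the dissipation coefficient of this $V$, entering with the correct sign instead of fighting you. The derivative is still only negative semidefinite, and here your Matrosov function $W = x_1x_2$ genuinely earns its keep: for $U = V + \eta W$ one gets $\dot U \le -(\alpha_2/\bar a^2 - \eta)x_2^2 - \eta\underline a\, x_1^2 + \eta\alpha_1 |x_1||x_2|$, a quadratic form that is negative definite for all sufficiently small $\eta>0$ (the fatal coefficient $\dot a_0 - \eta a_0$ never appears), while $U$ remains positive definite and decrescent. The standard Lyapunov theorem for time-varying systems then yields uniform asymptotic stability. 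So your structural instincts --- the semidefiniteness obstruction, the auxiliary function $x_1x_2$, the role of (ii) as a persistent-dissipation condition --- were all sound, but the proof only closes once the energy is weighted by $1/a_0$, not by $a_0$.
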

\begin{proof}
\textcolor[rgb]{0.00,0.00,0.00}{The original proof in given} \cite{Ignatyev1997}.
\end{proof}
\vspace{0.5mm}

Introducing an auxiliary state variable $\Psi \in \mathbb{R}^2$ and transforming \eqref{eq:3:2:1} into the transpose companion matrix form $\tilde{A}^{\ast}$ so that
$$
\dot{\Psi} = \tilde{A}^{\ast}(t) \Psi,
$$ 
one obtains \eqref{eq:3:2:2} for $\Psi \equiv [\psi, \dot{\psi}]^\top$ with the coefficients
\begin{eqnarray}
\label{eq:3:2:3}
  a_0 &=& 1/m \, \bigl(\partial F_c / \partial x + \sigma / \beta -L_2), \\[0.3mm]
  a_1 &=& L_1.
\label{eq:3:2:4}  
\end{eqnarray}

\begin{prop}
\label{prop:1} 
The time-varying observer system matrix \eqref{eq:3:2:1}, and hence the observer \eqref{eq:3:1:2}, is uniformly asymptotically stable if the observer gains satisfy
\begin{eqnarray}
\label{eq:3:2:5}
  L_1 & > & 0, \\[0.3mm]
  L_2 & < & \sigma / \beta.
\label{eq:3:2:6}  
\end{eqnarray}
\end{prop}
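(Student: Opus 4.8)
The plan is to invoke Theorem~\ref{theorem:1} for the companion-form equation \eqref{eq:3:2:2} with the coefficients $a_0,a_1$ of \eqref{eq:3:2:3}--\eqref{eq:3:2:4}, so that uniform asymptotic stability of that scalar equation transfers to $\tilde A(t)$ in \eqref{eq:3:2:1} and hence to the observer \eqref{eq:3:1:2}. First I would verify the standing hypotheses of the theorem, namely that $a_0,a_1$ are nonnegative, continuous and bounded. Since $a_1=L_1$ is a positive constant by \eqref{eq:3:2:5}, this is immediate for $a_1$. For $a_0$ I would use that $\sigma$ and $\beta$ enter the observer matrix as fixed design values, so the only time-varying contribution is $\partial F_c/\partial x$, which by the capping assumption and the monotone pre-sliding decay lies in $[0,\kappa]$. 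Combined with $L_2<\sigma/\beta$ from \eqref{eq:3:2:6}, this gives the two-sided bound $0<\tfrac{1}{m}(\sigma/\beta-L_2)\le a_0\le \tfrac{1}{m}(\kappa+\sigma/\beta-L_2)$, while continuity follows from the $C^1$ regularity of $\partial F_c/\partial x$ for $t>t_r$ noted after \eqref{eq:2:2:1}, the cap at $\kappa$ removing the reversal singularity.

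Next I would check condition (i): here $|a_1|=L_1$ is constant and $|\dot a_0|=\tfrac{1}{m}\bigl|\tfrac{d}{dt}(\partial F_c/\partial x)\bigr|$, which is finite because the relative velocity is bounded and $\partial F_c/\partial x$ is a capped $C^1$ function of the pre-sliding state; one then takes $\alpha_1$ as this bound plus $L_1$. The essential step is condition (ii), which requires a strictly positive uniform lower bound $\alpha_2$ on $\dot a_0+2a_0a_1=\tfrac{1}{m}\bigl[\tfrac{d}{dt}(\partial F_c/\partial x)+2L_1\bigl(\partial F_c/\partial x+\sigma/\beta-L_2\bigr)\bigr]$. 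The gain condition $L_2<\sigma/\beta$ supplies a strictly positive floor $2a_0a_1\ge \tfrac{2L_1}{m}(\sigma/\beta-L_2)>0$ originating from the viscous-lag term, which is exactly the role played by \eqref{eq:3:2:6}, and $2L_1\,\partial F_c/\partial x\ge 0$ adds a further nonnegative contribution.

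The hard part will be controlling $\tfrac{d}{dt}(\partial F_c/\partial x)$, which is negative throughout the pre-sliding decay and grows in magnitude as a reversal is approached, since by \eqref{eq:2:3:5} one has $\partial F_c/\partial x\propto -\ln|z|$ with $z\to 0$ at $t_r$. I would dispose of the singular neighbourhood of each reversal through the capping: while $\partial F_c/\partial x=\kappa$ one has $\dot a_0=0$, so the bound is there immediate and in fact large. On the remaining decay interval, where $\partial F_c/\partial x\in[0,\kappa)$ and $\dot a_0$ is finite, the argument is to show that the floor $\tfrac{2L_1}{m}(\sigma/\beta-L_2)$ together with $\tfrac{2L_1}{m}\partial F_c/\partial x$ dominates the bounded negative decay rate, giving $\dot a_0+2a_0a_1\ge\alpha_2>0$ uniformly in $t$. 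With conditions (i) and (ii) secured, Theorem~\ref{theorem:1} yields uniform asymptotic stability of \eqref{eq:3:2:2}, equivalently of $\tilde A(t)$, and since $u$ and $\bar w$ enter \eqref{eq:3:1:2} as bounded exogenous inputs this gives the asymptotic convergence $\tilde y\to y$. I expect the technical crux, and the place where the capping constant $\kappa$ and the velocity bound have to be reconciled with the purely gain-level conditions \eqref{eq:3:2:5}--\eqref{eq:3:2:6}, to be precisely this uniform lower bound in condition (ii) near reversals.
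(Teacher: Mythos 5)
Your proposal follows the same route as the paper: transform \eqref{eq:3:2:1} to companion form, read off $a_0,a_1$ from \eqref{eq:3:2:3}--\eqref{eq:3:2:4}, and verify conditions (i) and (ii) of Theorem~\ref{theorem:1}; your treatment of the standing hypotheses and of condition (i) (capping at $\kappa$ plus the velocity bound) matches the paper's and is, if anything, more careful. The genuine gap is that you never execute the one step on which the whole proof hinges. For condition (ii) you write that ``the argument is to show'' that the floor $\tfrac{2L_1}{m}(\sigma/\beta-L_2)+\tfrac{2L_1}{m}\,\partial F_c/\partial x$ dominates the negative term $\dot a_0$, but no such argument is given, and in fact none can be given from \eqref{eq:3:2:5}--\eqref{eq:3:2:6} alone. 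Differentiating \eqref{eq:2:3:5} with \eqref{eq:2:3:1} gives $\dot a_0 \propto \tfrac{d}{dt}(-\ln|z|)=-\dot z/z = -s\dot x/z \le 0$ (since $\dot x$ and $z$ share sign between reversals), and the worst instant is right after the cap releases, where $|z|=\exp\bigl(-\kappa/c\bigr)$ with $c \propto C_f\,|\mathrm{sign}(\dot x)-f_r|$ fixed; there $|\dot a_0| \propto |\dot x|/|z|$ is exponentially large in $\kappa$, while the available positive terms $\tfrac{2L_1}{m}\bigl(\partial F_c/\partial x+\sigma/\beta-L_2\bigr)\le \tfrac{2L_1}{m}\bigl(\kappa+\sigma/\beta-L_2\bigr)$ grow only linearly in $\kappa$ and $L_1$. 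Hence for fixed admissible gains and a bounded but not small velocity, $\dot a_0 + 2a_0a_1$ can transiently go negative, and no uniform $\alpha_2>0$ exists; completing your plan would require quantitative side conditions tying $L_1$ to $\kappa$, $s$ and the velocity bound, which the proposition does not provide.

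It is instructive to see how the paper's own proof disposes of this. Equation \eqref{eq:3:2:7} asserts $\dot a_0 = +\,C_1 s\,\dot x/z$, i.e.\ a \emph{non-negative} first summand in \eqref{eq:3:2:8}, which is then dropped as the worst case, so that condition (ii) collapses to $2a_0a_1 \ge \tfrac{2L_1}{m}(\sigma/\beta-L_2)>0$ --- exactly your floor. The entire difference between your attempt and the paper's proof is therefore the sign of $\dot a_0$. Note that your sign is the one consistent with the model: $\partial F_c/\partial x \propto -\ln|z|$ decreases as $|z|$ grows after a reversal, and the paper itself states in Section~\ref{sec:3} that $\partial F_c/\partial x$ decreases monotonically for $t>t_r$, which forces $\dot a_0\le 0$ during pre-sliding. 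So the obstruction you ran into is real: your proof is incomplete at the crux, and the paper's proof closes that crux only through a sign claim that your (correct) computation of $\dot a_0$ contradicts.
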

\vspace{0.5mm}
\begin{proof}
The proof consists in proving the conditions (i) and (ii) of Theorem \ref{theorem:1} for the coefficients \eqref{eq:3:2:3}, \eqref{eq:3:2:4}, which describe the characteristic polynomial of the observer system matrix $\tilde{A}^{\ast}(t)$ and so also $\tilde{A}(t)$ given by \eqref{eq:3:2:1}. 

Taking the time derivative of \eqref{eq:3:2:3}, with respect to \eqref{eq:2:3:5} and \eqref{eq:2:3:1}, one obtains 
\begin{equation}\label{eq:3:2:7}
\dot{a}_0(t) = C_1 s \, \dot{x} / z \quad \hbox{with} \quad  C_1 = C_f \bigl| \mathrm{sign}(\dot{x}) - f_r \bigr| / m.  
\end{equation}
Since $|z| > 0$ after each motion reversal and a bounded $|\dot{x}|$ can be assumed for any stable and implementable motion system \eqref{eq:2:1} for all times $ t > 0$, a bounded $\dot{a}_0(t)$ can be guaranteed, while $C_1$ is upper-bounded by $2 C_f / m$ per definition. Together with \eqref{eq:3:2:4}, it fulfils the condition (i) of Theorem \ref{theorem:1}. Then, the condition (ii) of Theorem \ref{theorem:1} requires that there exists a positive constant $\alpha_2$ so that 
\begin{equation}\label{eq:3:2:8}
C_1 s \, \dot{x} / z + 2L_1/m \, \bigl( \partial F_c / \partial x + \sigma / \beta -L_2 \bigr) \geq \alpha_2.  
\end{equation}
Since the first summand on the left-hand-side of \eqref{eq:3:2:8} is always non-negative, it can be assumed to be zero \textcolor[rgb]{0.00,0.00,0.00}{for the worst case of still guaranteing stability}. For $\dot{x}=0$ and with $L_1 > 0$, it is sufficient to prove $\partial F_c / \partial x + \sigma / \beta -L_2 > 0$ for all times $t > 0$. The lowest value of the pre-sliding partial derivative is zero for $|z| > 1$ cf. \eqref{eq:2:3:4}, that leads to \eqref{eq:3:2:6}. This completes the proof.
\end{proof}

\subsection{Monotonic observer convergence}
\label{sec:3:sub:3}

While Proposition \ref{prop:1} provides a parametric criterion for stability of the reduced-order asymptotic observer of $(\tilde{w}_2, \tilde{w}_3)$, one is also inherently interested in how the observer gains \eqref{eq:3:2:5}, \eqref{eq:3:2:6} must be assigned to achieve a potentially uniform  and effective state estimation.

Since the nonlinear dynamic friction state constitutes the \textcolor[rgb]{0.00,0.00,0.00}{primary}  challenge for observation, the uniformity can be expressed in a set of parameters that are equally suitable for both pre-sliding and gross sliding regimes of a non-stationary relative motion. At the same time, the measure of the effectiveness of an asymptotic state estimation can be formulated in such a way that to avoid an oscillating transient dynamics and, therefore, aiming for a possibly monotonic convergence of the estimation errors $(\epsilon_2, \epsilon_3) = (\tilde{w}_2, \tilde{w}_3)- (w_2, w_3)$ for all times $t > 0$.

Evaluating the eigenvalues of \eqref{eq:3:2:1} yields 
\begin{equation}\label{eq:3:3:1}
\lambda_{1,2} = -\frac{1}{2} \Bigl(L_1 \pm \sqrt{L_1^2 +
4/m \bigl( L_2 - \partial F_c / \partial x - \sigma/\beta
\bigr)} \, \Bigr).
\end{equation}
It can be seen that the time-dependent eigenvalues and thus also the assignment of the $(L_1, L_2)$-pair are dictated by the expression under the square root of \eqref{eq:3:3:1}. For making use of analysis of the eigenvalues \eqref{eq:3:3:1}, we will further assume a finite stiffness of the pre-sliding transitions at each time $t_r$ of a motion reversal, cf. Section \ref{sec:3}.

\begin{prop}
\label{prop:2} 
Assuming a bounded pre-sliding stiffness $\partial F_c / \partial x \Bigl |_{t=t_r} \rightarrow \kappa$ with $0 \ll \kappa < \infty$ at motion reversals, the eigenvalues \eqref{eq:3:3:1} of the observer system matrix \eqref{eq:3:2:1} can be \textcolor[rgb]{0.00,0.00,0.00}{assigned always negative real}, with dominant pole at
\begin{equation}\label{eq:3:3:2}
\max \{ \lambda_{1,2} \} = - \sqrt{\kappa/m} \,\textcolor[rgb]{0.00,0.00,0.00}{ (\rho - 1)},
\end{equation} 
by choosing 
\begin{eqnarray}
\label{eq:3:3:3}
  L_1 & = & 2 \rho \sqrt{\kappa/m}, \\[0.3mm]
  L_2 & = & \sigma / \beta + \kappa \bigl(1 - \rho^2 \bigr),
\label{eq:3:3:4}  
\end{eqnarray}
where $\rho > 1$ is the design parameter.

\end{prop}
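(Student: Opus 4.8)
The plan is to verify directly that the proposed gains force the radicand in \eqref{eq:3:3:1} to be non-negative and to control its value across the whole admissible range of the pre-sliding term. To this end I would abbreviate $q := \partial F_c/\partial x$ and recall from Section~\ref{sec:2:sub:3} that $q$ decreases monotonically from the reversal value $\kappa$ toward $0$ in gross sliding, so that $q \in [0,\kappa]$ for all $t > t_r$. Substituting $L_1 = 2\rho\sqrt{\kappa/m}$ and $L_2 = \sigma/\beta + \kappa(1-\rho^2)$ into the expression under the square root of \eqref{eq:3:3:1}, the two $\rho^2\kappa$ contributions cancel and the discriminant collapses to the compact form $D(q) = \tfrac{4}{m}(\kappa - q)$. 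This is the pivotal simplification and the reason for the particular scaling of the gains.

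First I would establish reality and negativity. Since $q \le \kappa$ on the admissible range, $D(q) \ge 0$, so both eigenvalues in \eqref{eq:3:3:1} are real. The more negative root $-\tfrac12(L_1+\sqrt{D})$ is manifestly negative because $L_1>0$. For the dominant root $-\tfrac12(L_1-\sqrt{D})$ it suffices that $L_1 > \sqrt{D}$, equivalently $L_1^2 > D(q)$, i.e. $\rho^2\kappa > \kappa - q$; this holds for every $q\in[0,\kappa]$ precisely because $\rho>1$ and $q\ge 0$. Hence both eigenvalues are strictly negative real for all $t>t_r$, which is the ``always negative real'' claim.

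Next I would pin down the dominant pole. Writing $\lambda_{\max}(q) = -\sqrt{\kappa/m}\bigl(\rho - \sqrt{1-q/\kappa}\bigr)$ exhibits the full $q$-dependence of the slowest eigenvalue. As $q$ ranges over $[0,\kappa]$ the factor $\sqrt{1-q/\kappa}$ decreases from $1$ to $0$, so $\lambda_{\max}(q)$ attains its least-negative (slowest) value at $q=0$, namely $-\sqrt{\kappa/m}(\rho-1)$, and its most-negative value $-\rho\sqrt{\kappa/m}$ at the reversal $q=\kappa$, where the two poles coincide as $D=0$. Thus the dominant pole of the observer over the entire operating range is $-\sqrt{\kappa/m}(\rho-1)$, realized in gross sliding, which is exactly \eqref{eq:3:3:2}. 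I would also remark that this assignment automatically respects Proposition~\ref{prop:1}, since $L_1>0$ and $L_2 = \sigma/\beta + \kappa(1-\rho^2) < \sigma/\beta$ for $\rho>1$.

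The main obstacle is conceptual rather than computational: the eigenvalues are time- and state-dependent through $q$, so every claim must be shown uniformly over $q\in[0,\kappa]$ rather than at a single operating point, and one must correctly identify that the slowest pole---the quantity actually being assigned---occurs at $q=0$ (gross sliding) and not at the reversal. Once the discriminant is seen to reduce to $\tfrac{4}{m}(\kappa-q)$, the remaining inequalities are elementary.
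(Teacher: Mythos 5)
Your proof is correct and rests on the same key computation as the paper's: with the gains \eqref{eq:3:3:3}--\eqref{eq:3:3:4}, the radicand in \eqref{eq:3:3:1} collapses to $4(\kappa-q)/m$ where $q:=\partial F_c/\partial x\in[0,\kappa]$, giving $\lambda_{1,2}=-\rho\sqrt{\kappa/m}\pm\sqrt{(\kappa-q)/m}$ (the paper's \eqref{eq:3:3:7}) and hence the dominant pole \eqref{eq:3:3:2} attained at $q=0$. The only difference is one of direction: the paper \emph{derives} the gains from the reality requirement ($L_1$ as a necessary condition, then $L_2$ as sufficient), whereas you \emph{verify} the stated gains directly, which additionally makes explicit the strict negativity of both roots and the monotonicity of $\lambda_{\max}(q)$ over $q\in[0,\kappa]$ --- details the paper leaves implicit.
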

\vspace{0.5mm}

\begin{proof}
Requiring the poles of the observer system matrix to be real implies
\begin{equation}\label{eq:3:3:5}
L_1^2 + 4/m \bigl( L_2 - \partial F_c / \partial x - \sigma/\beta
\bigr) \geq 0.
\end{equation}
\textcolor[rgb]{0.00,0.00,0.00}{This leads}, \textcolor[rgb]{0.00,0.00,0.00}{when \eqref{eq:3:2:6} is fulfilled}, to   
\begin{equation}\label{eq:3:3:6}
\sigma / \beta + \partial F_c / \partial x - m L_1^2 / 4 \leq L_2 < \sigma / \beta.
\end{equation}
\textcolor[rgb]{0.00,0.00,0.00}{To guarantee that \eqref{eq:3:3:6} holds}, the gain condition
$$
L_1^2 \geq 4/m \, \partial F_c / \partial x
$$
must be fulfilled for all times $t > 0$. Taking the boundary value $\kappa$ instead of $\partial F_c / \partial x$ and introducing the design parameter $\rho > 1$ leads to \eqref{eq:3:3:3}, which is necessary (but not sufficient) for the eigenvalues \eqref{eq:3:3:1} to be real $\forall \: t > 0$.
For obtaining the sufficient parametric condition, dictated by $L_2$ after the $L_1$-value is determined, one rewrites \eqref{eq:3:3:5} as 
$$
L_2 \geq \sigma/\beta  + \kappa - L_1^2 \,  m/4, 
$$
\textcolor[rgb]{0.00,0.00,0.00}{and then by substituting $L_1$ from \eqref{eq:3:3:3} one obtains \eqref{eq:3:3:4}. The parametric condition \eqref{eq:3:3:4} is sufficient.} Finally, substituting \eqref{eq:3:3:3} and \eqref{eq:3:3:4} into \eqref{eq:3:3:1}, results in 
\begin{equation}\label{eq:3:3:7}
\lambda_{1,2} = - \rho \sqrt{\kappa/m} \pm \sqrt{\bigl(\kappa - \partial F_c / \partial x\bigr) / m}.
\end{equation}
During the motion reversals, the second square root term in \eqref{eq:3:3:7} becomes zero, and \eqref{eq:3:3:7} results in the most left double real pole. \textcolor[rgb]{0.00,0.00,0.00}{Since and once each} pre-sliding transition saturates, the second square root term in \eqref{eq:3:3:7} converges to $\sqrt{\kappa / m}$, \textcolor[rgb]{0.00,0.00,0.00}{that leads to the dominant pole to be as in \eqref{eq:3:3:2}}. This completes the proof.  
\end{proof}

\subsubsection*{Numerical illustration}

For highlighting the observer parameterization provided by Proposition \ref{prop:2}, consider the model parameters to be similar (but not all exactly the same) to those of the identified experimental system, cf. Section \ref{sec:4}, and assign the observer gains \eqref{eq:3:3:3}, \eqref{eq:3:3:4} once with $\rho = 0.98$ and once with $\rho = 1.02$. Note that the implemented numerical simulation of the system \eqref{eq:2:1}, transformed into \eqref{eq:3:1}, includes additionally a band-limited white noise of the output value $w_1(t)$, see Figure \ref{fig:simconv} (b), \textcolor[rgb]{0.00,0.00,0.00}{so at to make it closer to real (physical) systems}. For the input $u(t) = K v(t)$, a low-pass filtered noise signal is assumed, see Figure \ref{fig:simconv} (a), whose amplitude is scaled so that to induce the relative velocities \textcolor[rgb]{0.00,0.00,0.00}{within a feasible range. Note that the velocity's range is usually bounded for any type of a motion control application}. Numerical simulations with 10 kHz sampling rate were performed. The numerical parameter values used in the simulations are listed in Table \ref{tab:1}. 
\begin{table}[!h]
\global\long\def\arraystretch{1.5}%
\caption{\label{tab:1}Numerical simulation parameters}
\centering
\begin{tabular}{|p{1.8cm}|p{0.8cm}|p{0.8cm}|p{0.8cm}|p{0.8cm}|p{0.8cm}|}
\hline {Parameter} &{$K$}    & {$m$}    & {$\sigma$}    &{$C_f$}   &{$s$}\tabularnewline 
\hline {Value}     &{$3.28$} & {$0.538$}& {$21.1$}      &{$0.35$}  &{$3000$}\tabularnewline 
\hline \hline {Parameter} &{$\beta$}&{$\kappa$}&{$\omega_{co}$}&{$\rho_1$}&{$\rho_2$}\tabularnewline 
\hline {Value}     &{$0.001$}& {$8000$} & {$40$Hz}      & {$0.98$} &{$1.02$}\tabularnewline 
\hline
\end{tabular}
\end{table}

The observed friction state $\tilde{w}_3(t)$ is depicted versus the simulated one in Figure \ref{fig:simconv} (c). \textcolor[rgb]{0.00,0.00,0.00}{Once can recognize that for design parameter value $\rho = 0.98$, that violates the stability condition of Proposition \ref{prop:2}, the state estimation starts to diverge, although only after the first 4.3 sec}. On the contrary, the very slightly differing $\rho = 1.02$, \textcolor[rgb]{0.00,0.00,0.00}{which still fulfills} Proposition \ref{prop:2} and thus the stability criterion, leads to a robust monotonically converged estimate, while the output noise is propagated into the $\tilde{w}_3(t)$ state. Note that a long-term simulations (like e.g., $t > 60$ sec) with $\rho = 1.02$, or simulations with the further increased output noise, provide equally a stable and uniformly-converging friction estimate $\tilde{w}_3(t)$. For the sake of completeness, also the estimated velocity state $\tilde{w}_2(t)$ when $\rho = 0.98$ is compared with the simulated one $w_2(t)$ in Figure \ref{fig:simconv} (d).
\begin{figure}[!h]
\centering
\includegraphics[width=0.98\columnwidth]{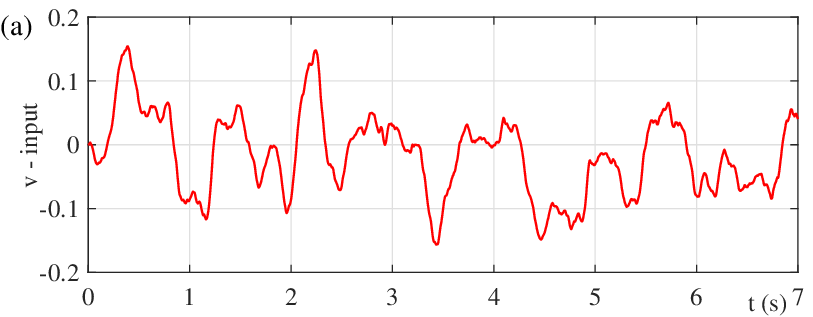}
\includegraphics[width=0.98\columnwidth]{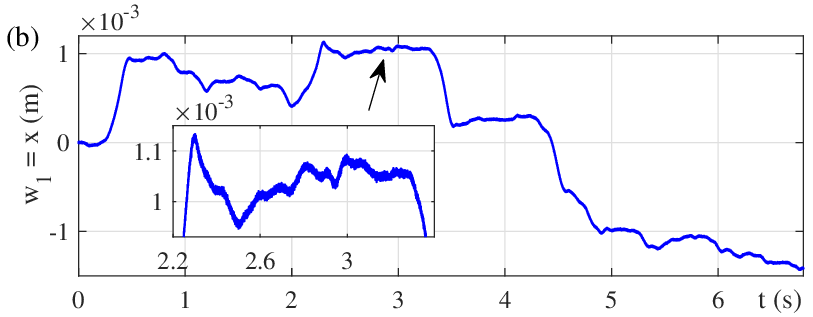}
\includegraphics[width=0.98\columnwidth]{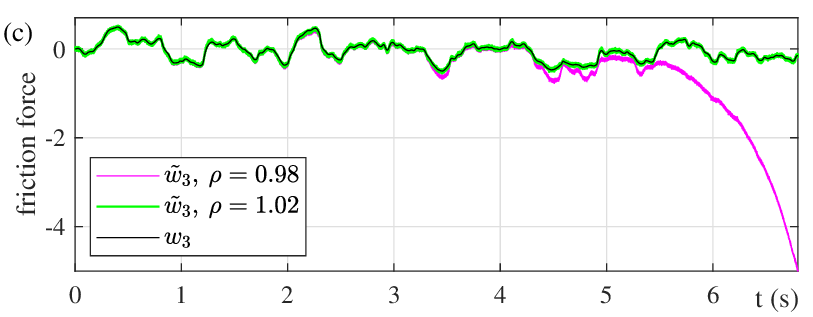}
\includegraphics[width=0.98\columnwidth]{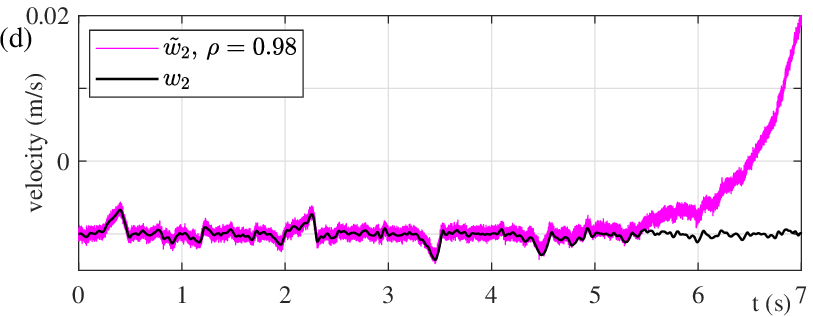}
\caption{Numerical example: band-limited random input signal $u(t)$ in (a), noisy output displacement state $w_1(t)$ used as the observer input in (b), simulated $w_3(t)$ versus observed  $\tilde{w}_3(t)$ friction state for two design parameters values $\rho = \{0.98, \, 1.02\}$ in (c), and simulated $w_2(t)$ versus observed  $\tilde{w}_2(t)$ velocity state for $\rho = 0.98$ in (d).}
\label{fig:simconv}
\end{figure}

\subsection{Propagation of measurement noise}
\label{sec:3:sub:4}

Since the measured output displacement is subject to some sensor noise $\eta(t)$, which can only be assumed to be norm-bounded (like e.g., $|\eta| < \textmd{const}$), \textcolor[rgb]{0.00,0.00,0.00}{but constitutes generally} a random process, it is of interest to analyze the noise propagation through the observer by assuming
\begin{equation}\label{eq:3:4:1}
\bar{w}(t) = x(t) + \eta(t).
\end{equation} 
Substituting \eqref{eq:3:4:1}, then zeroing the control input (i.e., $u=0$), and evaluating the observer equations \eqref{eq:3:1:2}, \eqref{eq:3:1:3} with the assigned design parameters, the dynamic friction state results in
\begin{eqnarray}
\label{eq:3:4:2}
  \dot{w}_3 &=& \kappa  \rho^2 w_2 + 2 \kappa \rho^3 \sqrt{\kappa/m} \, \bigl(x(t) + \eta(t) \bigr), \\[1mm]
  \tilde{w}_3 &=& w_3 - \kappa(\rho^2-1) \bigl(x(t) + \eta(t) \bigr). 
\label{eq:3:4:3}
\end{eqnarray}
Assuming, without loss of generality, an operation point $x=0$ and substituting the integrated \eqref{eq:3:4:2} into \eqref{eq:3:4:3} yields 
\begin{equation}\label{eq:3:4:4}
\tilde{w}_3 = \kappa  \rho^2 w_1 + 2 \kappa \rho^3 \sqrt{\kappa/m} \int \eta(t) dt - \kappa(\rho^2-1) \eta(t),
\end{equation} 
Even though an unbiased noise signal will be averaged through the second integral term on the right-hand-side of \eqref{eq:3:4:4}, the third right-hand-side term leads to a large propagation of 
$\eta(t)$ into the $\tilde{w}_3(t)$ state since $\kappa \gg 1$. Note that this is independent of the observer design, provided $\rho > 1$ is as by definition. This fact postulates that the friction state estimate will always, as long as $\eta \neq 0$ holds, contain the high-frequency noise components, which must be filtered out in postprocessing. A standard second-order linear low-pass filter described by
\begin{equation}\label{eq:3:4:5}
\ddot{\tilde{f}}(t) + 2 \omega_{co} \dot{\tilde{f}}(t)  +  \omega_{co}^2 \tilde{f}(t) = \omega_{co}^2  \tilde{w}_3(t),
\end{equation} 
with the cutoff frequency $\omega_{co} > 0$, is used for obtaining the friction force estimation $\tilde{f}(t)$. Note that $\omega_{co}$ can be customary determined taking into account both the evaluated sensor noise and the required reference value dynamics. \textcolor[rgb]{0.00,0.00,0.00}{This is especially with regard to the motion reversals and, thus, the expected friction transients in pre-sliding for the known $s$ and reference velocity values.}

\section{Experimental evaluation}
\label{sec:4}

The experimental system used \textcolor[rgb]{0.00,0.00,0.00}{for the control evaluation when compensating} for the nonlinear dynamic friction is the voice-coil-motor (VCM)-based linear displacement actuator, see Figure \ref{fig:expsetup}.  More technical details, including the system identification, can be found, e.g., in \cite{ruderman2022}. The available system signals are the input voltage $v(t)$ (in V) and the output relative displacement $x(t)$ (in m), while the real-time sampling rate of the control unit is set to 10 kHz. The nominal system dynamics is given by
\begin{equation*} \label{eq:4:1}
m \ddot{x}(t) = u(t) - G - f(t),
\end{equation*}
where the input force is $u= K v$, with $K = 3.28$ N/V, while neglecting the electro-magnetic time constant of the actuator which is $\approx 0.0012$ sec. The overall moving mass is $m=0.538$ kg, and the constant gravity term is $G = 5.27$ N. Note that the latter is feed-forward pre-compensated and, therefore, does not further affect the control design. The nonlinear friction $f$ is assumed to be weakly known, while the nominal parameters $C_f = 0.35$ N, $s = 3000$, $\sigma = 21.1$ N$\cdot$s$/$m, and $\kappa = 8000 \cdot s$ are identified from a series of the dedicated quasi-stationary experiments. \textcolor[rgb]{0.00,0.00,0.00}{Since an estimated $\beta$-value is too small comparing to the system sampling rate, the viscous dynamics \eqref{eq:2:2:2} is not explicitly considered and, hence, it is substituted by \eqref{eq:2:1:3} with the corresponding adjustments of the state observer \eqref{eq:3:1:2}. Note that it has a minor impact on the observer performance, since the Coulomb friction is the main source of uncertainties and is crucial for compensation in the motion control systems.}
\begin{figure}[!h]
\centering
\includegraphics[width=0.85\columnwidth]{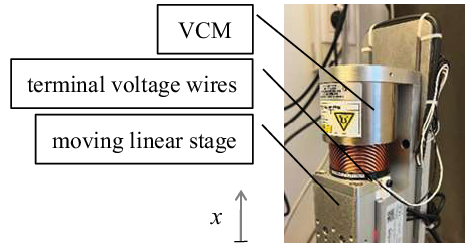}
\caption{Linear displacement actuator (laboratory view).} \label{fig:expsetup}
\end{figure}

\subsection{Optimally tuned PID control}
\label{sec:4:sub:1}

When approximating the motion system \eqref{eq:2:1} by an underlying transfer function (of complex Laplace variable $\lambda$)
\begin{equation}
\label{eq:linplant}
H(\lambda)=\frac{\phi}{\lambda(\tau \lambda +1)},
\end{equation}
with the overall system gain $\phi > 0$ and the dominant pole at $-1/\tau$, where $\tau > 0$ is the inertial time constant, then a standard PID control can be optimized in frequency domain with an objective of disturbance suppression. This is done based on the input disturbance sensitivity function
\begin{equation}\label{eq:Syd}
S(\lambda) = \frac{H(\lambda)}{1+H(\lambda)C(\lambda)}, 
\end{equation}
where the PID control is written in the following form
\begin{equation}\label{eq:Cpid}
C(\lambda) = \frac{k (\tau \lambda +1) (T_\mathrm{I}\lambda +
1)}{T_\mathrm{I}\lambda}.
\end{equation}
After cancelation of the plant system pole, the remaining control design parameters are the total control gain $k>0$ and the integrator time constant $T_\mathrm{I} > 0$.

By assigning the upper bound of $\bigl|S(\lambda)\bigr|$, as design specification, both control parameters are determined as 
\begin{equation}\label{eq:pidGain}
k = \frac{1}{\max |S|} 
\end{equation}
and 
\begin{equation}\label{eq:pidTime}
T_\mathrm{I} = \frac{\tan^2(\Phi)}{k \phi \sqrt{1 + \tan^2(\Phi)}},
\end{equation}
where the phase related argument $\Phi$ is computed for the open-loop crossover frequency $\omega_\mathrm{s}$ as
\begin{equation}\label{eq:L_omegas}
\Phi = \pi + \angle \bigl[ C(j \omega_\mathrm{s}) H(j \omega_\mathrm{s})  \bigr].
\end{equation}
The designed, this way, PID control \eqref{eq:Cpid} ensures that any matched disturbance $d(\lambda)$ will affect the controlled output $x(\lambda)$ as bounded by  
$$
\bigl |x(j \omega) \bigr | \leq \max |S| \cdot \bigl |d(j \omega) \bigr | \quad \forall \: \omega,
$$
where $\omega$ is the angular frequency and $j$ is the imaginary unit. More details on the above  given optimal design of a PID control for disturbance suppression can be found in \cite{ruderman2022sensitivity}.

\subsection{Comparison of experimental control results}
\label{sec:4:sub:3}

Two experimental scenarios are purposefully used. The first one is the positioning task for the reference profile $x_{\textmd{ref}}(t)$ depicted in Figure \ref{fig:expref} (a). The second one is the motion tracking of a smooth reference trajectory $x_{\textmd{ref}}(t)$ depicted in Figure \ref{fig:expref} (b). Note that the latter, which is an up-chirp sequence between 0.01 Hz and 3 Hz, induces the reference velocities in the range $|\dot{x}_{\textmd{ref}}| \in [0,\, 0.1]$ m/sec.  
\begin{figure}[!h]
\centering
\includegraphics[width=0.98\columnwidth]{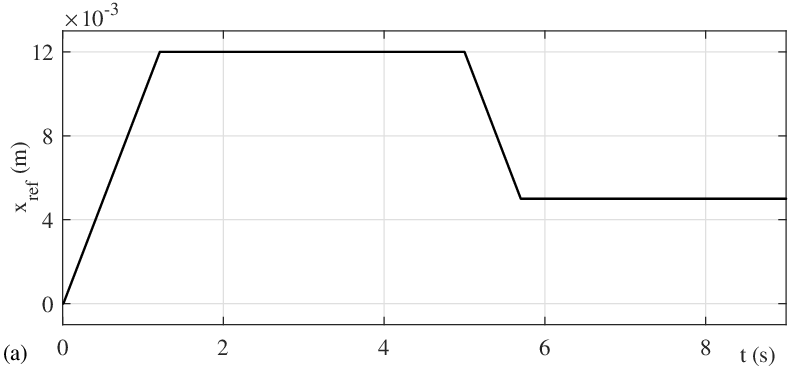}
\includegraphics[width=0.98\columnwidth]{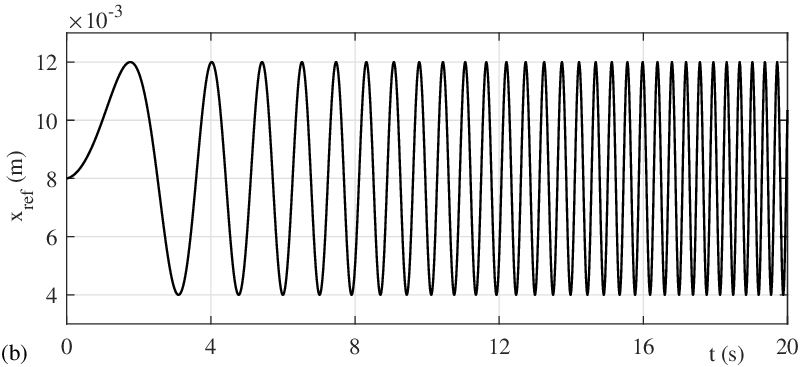}
\caption{Reference trajectories of the evaluated positioning (a) and tracking (b) control tasks.}
\label{fig:expref}
\end{figure}

The reference PID control, which is optimally tuned to the disturbance suppression according to Section \ref{sec:4:sub:1}, is applied in the parallel form (to simplify a real-time implementation)
\begin{equation}\label{eq:PIDparallel}
u_\mathrm{pid}(t) = k_\mathrm{p} e(t) + k_\mathrm{i} \int e(t)dt +
k_\mathrm{d} \dot{e}(t),
\end{equation}
where $e(t) = x_{\textmd{ref}}(t) - x(t)$ is the controlled output error.
The proportional, integral, and derivative gains
\begin{equation}
\label{eq:PIDparams}
k_\mathrm{p} = k \frac{T_\mathrm{I} + \tau}{T_\mathrm{I}} =
429, \; k_\mathrm{i} = \frac{k}{T_\mathrm{I}} =
4348, \; k_\mathrm{d} = k \tau = 2.67
\end{equation}
are determined as in Section \ref{sec:4:sub:1}. For the observer design, the assigned parameters (cf. Section \ref{sec:3:sub:4}) $\rho = 4$ and  $\omega_{co} = 2 \pi \times 40 $ rad/sec are determined by the experimental tuning. The measured output error of both evaluated controllers, the PID control \eqref{eq:PIDparallel}, \eqref{eq:PIDparams} and the same PID control augmented by the friction observer so that $u = u_\mathrm{pid} + K^{-1} \tilde{w}_3$, are shown in Figure \ref{fig:experror} (a) and (b) for the positioning and tracking references, respectively.  
\begin{figure}[!h]
\centering
\includegraphics[width=0.98\columnwidth]{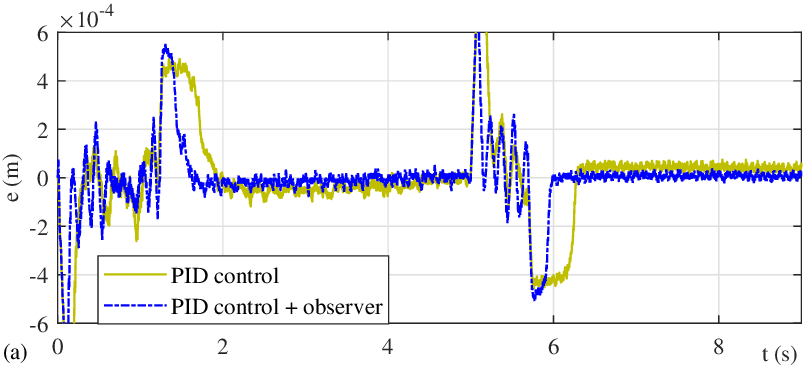}
\includegraphics[width=0.98\columnwidth]{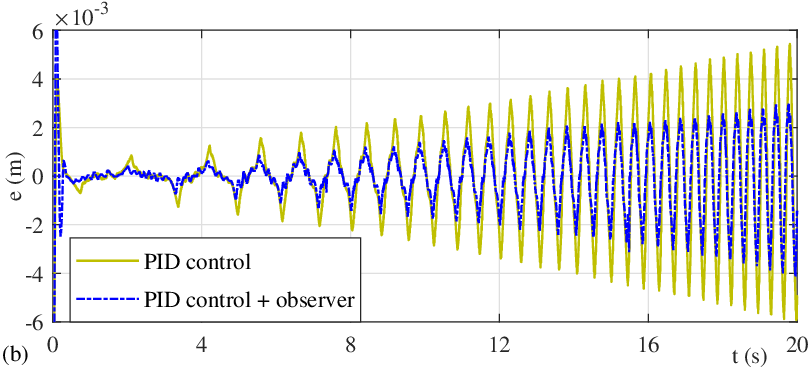}
\caption{Measured control error of positioning (a) and tracking (b).}
\label{fig:experror}
\end{figure}
It can be seen that during the positioning task, the observer-based friction compensation shortens the transient response and allows to reach zero steady-state error, while the latter is not achieved by PID control within both steady-state periods. 
\begin{figure}[!h]
\centering
\includegraphics[width=0.98\columnwidth]{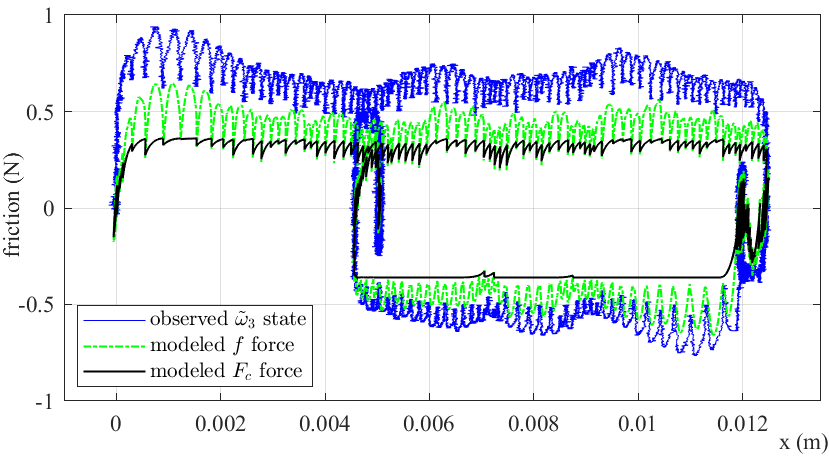}
\caption{Estimated by observer and model-computed friction force.}
\label{fig:frictionloops}
\end{figure}
Also in case of the reference tracking, an improvement through the observer-based friction compensation is clearly visible for all frequencies of the chirp reference signal. For providing an illustrative insight into the observer-based estimate of the nonlinear friction, the latter is shown in Figure \ref{fig:frictionloops} for positioning experiment, in terms of $\tilde{w}_3(t)$ over the measured relative displacement $x(t)$ in comparison with the $f(t)$-value computed solely by the corresponding friction model. Note that here (i.e., when computing the curves shown in Figure \ref{fig:frictionloops}), the same low-pass filtering with $\omega_{co}$, cf. \eqref{eq:3:4:5}, is used for all observed state variables. This is for obtaining the shown $\tilde{w}_3$-value as well as for using the $\tilde{w}_2$-state, which is then the input signal for a model-based computation of the friction force $f$. Note that also the only model-computed Coulomb friction part $F_c(t)$ is shown. It allows an appropriate comparison between the Coulomb and viscous friction contributions to the overall friction force $f$, cf. \eqref{eq:2:1:1}.

\section{Conclusions}
\label{sec:5}

The paper provides the analysis and experimental evaluation of an asymptotic friction observer, proposed \textcolor[rgb]{0.00,0.00,0.00}{initially} in \cite{ruderman2023robust}, whose goal is to estimate the nonlinear state of kinetic friction in addition to \textcolor[rgb]{0.00,0.00,0.00}{the non-measured relative velocity}. The observer benefits from the regular state-space form of describing the unavailable dynamic motion states and hence forms a reduced-order Luenberger observer, while the only measured relative displacement is subject to the sensor noise. We proved, by using the Ignatyev stability theorem \cite{Ignatyev1997}, that the resulting observer with time-varying system matrix is uniformly asymptotically stable. Moreover, we derived the necessary and sufficient conditions for the observation gain parameters to ensure a monotonic convergence of the state variables. It is noteworthy that the proposed convergence analysis leads to only one free design parameter, while the time-dependent eigenvalues remain always real in both the pre-sliding and gross sliding friction regimes. A standard PID feedback regulator was assumed as a reference control system optimized for suppressing the disturbances, \textcolor[rgb]{0.00,0.00,0.00}{since the frictional force can be regarded as a matched disturbance in the motion dynamics.} An experimental comparison of the PID control with and without the friction observer was performed on a laboratory setup for two characteristic positioning and tracking reference trajectories.

\bibliographystyle{elsarticle-harv}
\bibliography{references}

\end{document}